\newtheorem{theorem}{\textbf{Theorem}}
\def\BibTeX{{\rm B\kern-.05em{\sc i\kern-.025em b}\kern-.08em
    T\kern-.1667em\lower.7ex\hbox{E}\kern-.125emX}}
\DeclareMathOperator{\Var}{Var}
\begin{document}

\title{A First Look at Inter-Cell Interference in the Wild
}

\author{
\IEEEauthorblockN{Daqian Ding}
\IEEEauthorblockA{\textit{Shanghai Jiao Tong University} \\
Shanghai, China\\
daqian.ding@sjtu.edu.cn}
\and
\IEEEauthorblockN{Yibo Pi}
\IEEEauthorblockA{\textit{Shanghai Jiao Tong University} \\
Shanghai, China\\
yibo.pi@sjtu.edu.cn}
\and
\IEEEauthorblockN{Cailian Chen}
\IEEEauthorblockA{\textit{Shanghai Jiao Tong University} \\
Shanghai, China\\
cailianchen@sjtu.edu.cn}
}

\maketitle

\begin{abstract}
In cellular networks, inter-cell interference management has been studied for decades, yet its real-world effectiveness remains under-explored. To bridge this gap, we conduct a first measurement study of inter-cell interference for operational 4G/5G networks. Our findings reveal the prevalence of inter-cell interference and a surprising absence of interference coordination among operational base stations (BSs). As a result, user equipments (UEs) experience unnecessary interference, which causes significant signal quality degradation, especially under frequency-selective channel fading. We examine the inter-cell interference issues from four major perspectives: (1) network deployment, (2) channel assignment, (3) time-frequency resource allocation, and (4) network configuration. In none of these dimensions is inter-cell interference effectively managed. Notably, even when spectrum resources are underutilized and simple strategies could effectively mitigate inter-cell interference, BSs consistently prioritize using the same set of time-frequency resources, causing interference across cells. Our measurements reveal substantial opportunities for improving signal quality by inter-cell interference management.
\end{abstract}

\section{Introduction}

With the global rollout of 5G, the number of deployed 5G base stations (BSs) has exceeded five million worldwide~\cite{5GAmerica2022}, supporting ever-growing demands for mobile services. 5G follows a BS-centric architecture as 4G, where all user equipments (UEs) within a cell connect to the same BS and may experience \emph{inter-cell interference} from neighboring cells using the same frequency bands.
Compared to 4G LTE, 5G operates across a wider range of frequency bands, including mid-band (e.g., 3.5 GHz) and mmWave (e.g., 28 GHz), to provide higher throughput~\cite{3GPP2021}. However, the shift to higher frequencies also results in reduced coverage, demanding denser BS deployments to maintain seamless connectivity. While this densification is essential for achieving the high throughput and low latency promised by 5G, it also exacerbates inter-cell interference, particularly in dense urban environments. Understanding and mitigating inter-cell interference is crucial for optimizing both 4G/5G network performance.

As a fundamental challenge in cellular networks, inter-cell interference has been studied for decades.
A basic solution is to assign different frequencies to neighboring BSs~\cite{subramanian2008minimum}. However, the limited number of available frequency bands is insufficient to support dense deployment, making it essential for frequency reuse. To balance frequency reuse and co-channel interference, each BS's coverage can be divided into inner and outer regions. BSs operate on the same frequency in the inner region and different frequencies in the outer region to minimize inter-cell interference~\cite{hamza2013survey}. To further enhance spectral efficiency, more advanced methods reuse the full spectrum with fine-grained time-frequency coordination among BSs~\cite{yu2013downlink}. 
Decades of research on inter-cell interference management have produced numerous advanced techniques. However, measurement studies on inter-cell interference in operational cellular networks are still lacking. Consequently, the severity of real-world inter-cell interference as well as its potential for improvement remains largely unknown.

\textbf{Contributions.} To bridge this gap, we conduct a first measurement study of inter-cell interference for operational 4G and 5G networks. Our goals are two-fold. (1) We aim to characterize inter-cell interference and its impact on signal quality. Measurement methodology is carefully designed to estimate interference at both coarse and fine-grained granularities.
(2) We aim to mitigate the inter-cell interference from four perspectives, where potential issues are identified and solutions are discussed. Our key findings are summarized below.
\begin{itemize}[leftmargin=*] 
    \item Our study reveals that inter-cell interference is prevalent from both cell and UE perspectives. In 4G and 5G networks, every cell has at least one interfering neighbor, resulting in almost all UEs being affected by inter-cell interference. Due to the denser deployment of 5G, UEs experience interference from significantly more neighboring cells compared to 4G, leading to more severe inter-cell interference. 
    \item We find that both the received signal strength and interference increase with the number of interfering cells per UE. In 4G, interference grows faster than the desired signal strength, leading to a gradual decline in Signal-to-Interference-plus-Noise Ratio (SINR) until the number of interfering cells per UE exceeds a certain threshold. In 5G, this threshold is easily exceeded due to dense deployment, making SINR insensitive to variations in cell density. This indicates that dense deployment in 5G enhances network capacity, but has limited improvement in SINR.
    \item Conducting RB-level channel estimation, we demonstrate the impact of multipath effects on SINR. We observe that the frequency-selective channel leads to significant variations in both channel gain and interference across RBs, with SINR differing by up to 40 dB. This indicates that relying on channel quality measured over a certain frequency range for handover, as in current practice \cite{handover}, can lead to undesired outcomes. This also suggests that resource allocation strategies aware of frequency-selective fading can more effectively utilize spectrum resources.
    \item We find that inter-cell interference can occur between cells on the same BS. About 30\% of 4G BSs and 60\% of 5G BSs suffer from intra-BS inter-cell interference, affecting around 40\% and 70\% of UEs, respectively. Most BSs without such interference are deployed on rooftops, which offer greater spatial flexibility in antenna placement. 
\end{itemize}
To mitigate inter-cell interference, we analyze its root causes and discuss potential solutions from four perspectives: BS deployment, channel allocation, time-frequency resource allocation, and network configuration.
\begin{itemize}[leftmargin=*] 
    \item Due to cost considerations, 5G BS deployment complements the existing 4G infrastructure and has not fully leveraged its advantages of dense deployment.
    \item Both 4G and 5G networks suffer from imbalanced channel allocation across frequency bands. In 4G, most BSs are concentrated on a small subset of channels. In 5G, BSs in the same geographic area are not evenly distributed across channels, worsening inter-cell interference.
    \item Although wireless channel is frequency selective, existing resource allocation adopts a frequency-agnostic strategy that allocates RBs from the lowest frequencies upward. Even when traffic load is low, there is a high probability of resource collisions, leading to inter-cell interference.
    \item Physical Cell ID (PCI) collisions are prevalent in 4G and 5G, causing neighbor cells to transmit RSs on the same time-frequency resources, degrading channel estimation. Moreover, different PCI configurations across signaling message types introduce disparities in their SINRs. As a result, certain messages have low decoding success rates and become bottlenecks in the communication process.
\end{itemize}

\textbf{Dataset and Artifact Release.} We will make our dataset, artifacts, and source code publicly available at a later time.

\section{Measurement Methodology}
\label{sec:methodology}

\begin{figure}[!t]
    \begin{minipage}{0.25\textwidth}
    \centering
    \includegraphics[width=0.8\textwidth]{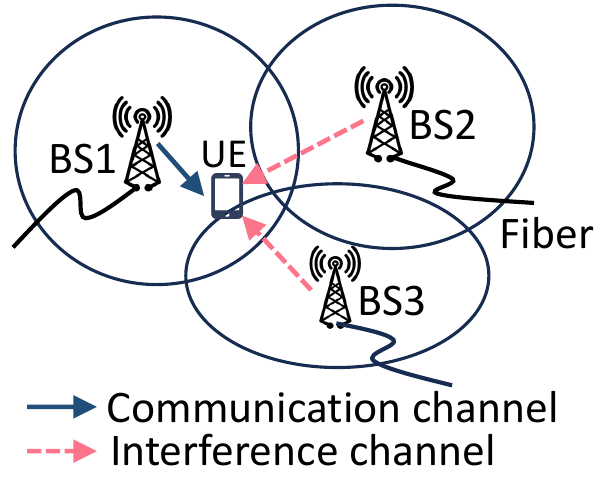}
    \caption{Inter-cell interference}
    \label{fig:inter_cell_interference}
    \end{minipage}
    \hfill
    \begin{minipage}{0.2\textwidth}
    \centering
    \includegraphics[width=0.58\textwidth]{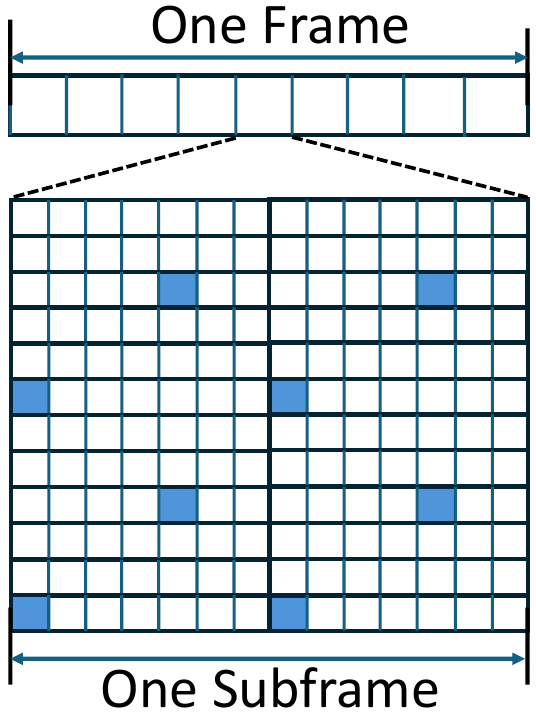}
    \caption{4G RS layout}
    \label{fig:4g_rs}    
    \end{minipage}
    \vspace{-1em}
\end{figure}

We conduct comprehensive measurements within campus region (extended to urban and rural areas in Appendix~\ref{sec:urban}) and carefully design methods for estimating fine-grained RB-level interference, not directly accessible using existing tools.

\textbf{Coverage.} We conduct our measurements on a $2.5km\times1.2km$ campus by traversing all road segments. About 20,840 measurement points are collected in total at 1,389 UE locations. On the campus, we identify 132 4G cells and 197 5G cells as of March 2025, with an average of 53 and 70 measurement points per cell, respectively.
We perform a blanket search of the cells and obtain their information by decoding the Synchronization Signal Block (SSB) and Master Information Block (MIB) following the cell search procedure. The cell search results are further validated using the neighboring cell information decoded from the System Information Blocks (SIBs). Cell-specific metrics are calculated for those with above 30 measurement points to guarantee statistical significance.
As detailed in Table~\ref{tab:band_coverage}, 4G network operates on 7 bands with bandwidths of 10 MHz or 20 MHz, while 5G network operates on 4 bands with bandwidths of 60 MHz or 100 MHz. The 5G center frequencies are higher than those of 4G, while 4G bands have more diverse center frequencies than 5G bands. A band could include more than one channel, each of which is centered at a certain frequency. Our measurement study covers all licensed commercial 4G/5G frequency bands
in our country belonging to three Internet service providers (ISPs), where ISPs 2 and 3 share the same set of frequency bands. 

\textbf{Measurement Tools.} Our measurement campaign was carried out using a smartphone, Xiaomi 10 (Qualcomm X55), and two software-defined radios (USRP B210) that support frequencies from 70 MHz to 6 GHz. The smartphone collects the Reference Signal Received Power (RSRP) and SINR samples every 500ms with CellularZ~\cite{celluarZ}. Meanwhile, the USRPs continuously record the received 4G and 5G OFDM symbols, which are processed to estimate RB-level interference and decoded to retrieve the control channel information of BSs. Note that signaling information (e.g., SSB and SIB) and key performance indicators (e.g., RSRP, SINR) can also be collected using other commercial software, e.g., XCAL\cite{XCAL}, but RB-level channel estimation requires signal processing on the received OFDM symbols.


\textbf{Inter-Cell Interference Detection.} 
We consider two cells \emph{interfering} if their PCIs can be decoded by a UE on the same channel. To identify such cells, we configure the UE to lock onto a specified channel and repeat this for all channels. We extract RRC control messages from the chipset diagnostic log to identify carrier aggregation (CA) configurations and exclude secondary cells assigned via CA from our analysis. We find that CA is rarely used in our measurement region. This allows us to focus only on interfering cells operating on the same channel, not coordinated transmissions by neighbor BSs.

\begin{figure*}[t!]
    \begin{minipage}{0.21\textwidth}
    \centering
    \captionsetup{width=2\textwidth, justification=raggedright, type=table}
    \caption{List of channels covered in measurements}
    \resizebox{1.9\textwidth}{!}{
    \begin{tabular}{cccccc}
    \hline
       & \textbf{Band} & \textbf{Center Freq.} & \textbf{BW} & \textbf{\#Cells} & \textbf{ISP} \\ \hline
    4G & b1            & 2145      & 20      & 5    & ISP2\&3      \\ \cline{2-6} 
       & b3            & 1815        & 20    & 19      & ISP1        \\ \cline{2-6} 
        & b3            & 1835     & 10      & 3      & ISP2\&3      \\ \cline{2-6}
       & b3            & 1850,1870    & 20   & 31,17      & ISP2\&3      \\ \cline{2-6} 
       
       & b8            & 939        & 10    & 20      & ISP1        \\ \cline{2-6} 
       & b8            & 954        & 10    & 2      & ISP2\&3      \\ \cline{2-6} 
        & b34           & 2017.5       & 10     & 6       & ISP1        \\ \cline{2-6} 
       & b39           & 1895      & 20     & 25     & ISP1        \\ \cline{2-6} 
       & b40           & 2330      & 20   & 4    & ISP1        \\   
    \hline 
    
    5G & n41           & 2524.95(2565)     & 100    & 60       & ISP1        \\ \cline{2-6} 
       & n41           & 2662.95(2644.80)   & 60   & 44       & ISP1        \\ \cline{2-6} 
       & n78           & 3408.96,3509.76    & 100   & 42,11         & ISP2\&3      \\ \cline{2-6} 
       & n79           & 4850.4           & 100    & 40        & ISP1        \\ 
    \hline
    \end{tabular}
    }
    \label{tab:band_coverage}        
    \end{minipage}
    \hfill
    \begin{minipage}[!t]{0.57\textwidth}
    \captionsetup{type=figure}
    \begin{subfigure}{0.31\textwidth} 
        \centering
        \includegraphics[width=\textwidth]{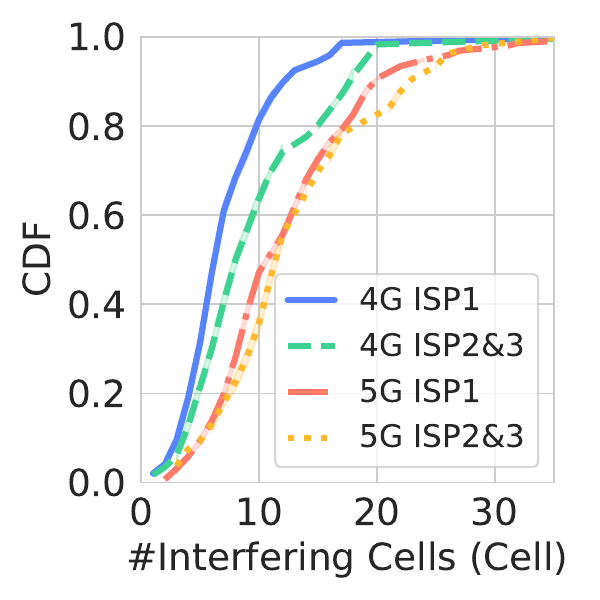}
        \caption{Cell perspective}
        \label{fig:neighbor_cells_cdf_by_isp}
    \end{subfigure}
    \hspace{-0.1cm}
    \begin{subfigure}{0.35\textwidth} 
        \centering
        \includegraphics[width=\textwidth]{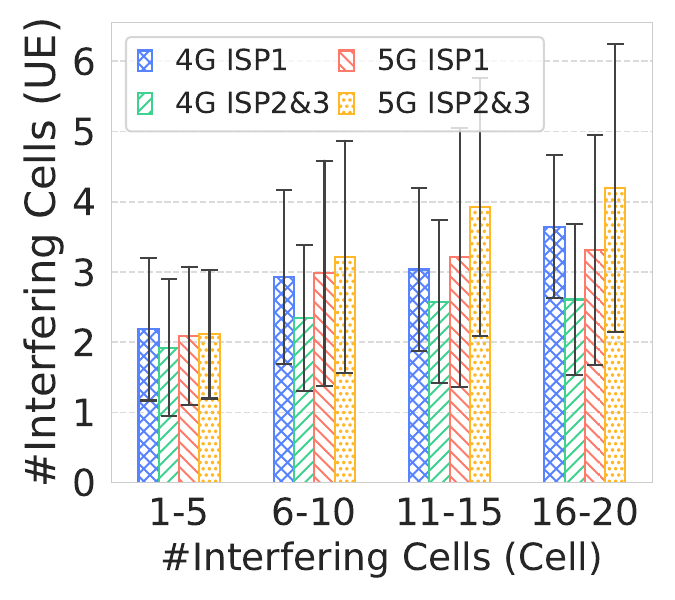}
        \caption{Cell vs. user perspective}
        \label{fig:interference_cell_vs_neighboring_cell_by_isp}
    \end{subfigure}
    \hspace{-0.1cm}
    \begin{subfigure}{0.31\textwidth} 
        \centering
        \includegraphics[width=\textwidth]{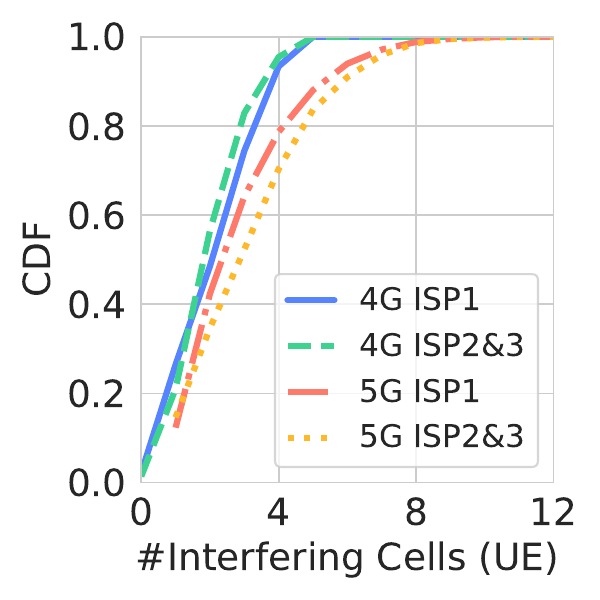}
        \caption{User perspective}
        \label{fig:interfering_cells_cdf_by_isp}
    \end{subfigure}
    \caption{Prevalence of inter-cell interference from the cell and UE perspectives. All UEs are interfered by at least one cell. The average number of interfering cells per UE in 5G is almost twice that in 4G.}
    \label{fig:prevalence}
    \end{minipage}
\end{figure*}

\textbf{UE-Level Interference Estimation}. Per the definition of SINR, interference can be expressed as 
\begin{equation}
    \text{Interference (dBm)} = \text{RSRP (dBm)} - \text{SINR (dB)},
\end{equation}
where RSRP is the average reference signal received power. Since UE-level RSRP and SINR are continously estimated to assess channel quality for handovers between BSs, they are readily accessible on smartphones~\cite{rsrq_handover}.

\textbf{RB-Level Interference Estimation.} Since smartphones do not expose RB-level channel information, we analyze physical-layer received signals to estimate RB-level interference.
Figure~\ref{fig:inter_cell_interference} illustrates a three-node network, where BS 1 serves the UE, and BSs 2 and 3 are interferers on the same channel. Since 4G and 5G employ orthogonal frequency-division multiplexing (OFDM)-based modulation, the signal received on a resource element (RE)\footnote{A RE is the smallest unit of time-frequency resource consisting of one subcarrier in the frequency domain and one OFDM symbol in the time domain.} (denoted by \( Y \)) can be written as
\begin{equation}
    Y = h_1 X_1 + I + N,
\end{equation}
where $X_1$ is the transmitted symbol from BS 1, $h_1$ is the corresponding channel gain, $I$ is the combined interference from BSs 2 and 3, and $N$ is the noise. To estimate the interference $I$, we first estimate the channel gain $h_1$ using the Least Squares (LS) estimator as
\begin{equation}
    \hat{h}_1 = \frac{Y}{X_1} = h_1 + \Delta,
\end{equation}
where $X_1$ is the reference signal from BS 1 known to the UE and $\Delta = (I + N)/X_1$. 
Due to the randomness in interference and noise, the channel estimation accuracy could be significantly affected. To better estimate $h_1$, we combine multiple reference signal REs within coherence time\footnote{Coherence time is the duration over which a wireless channel remains stable, typically on the order of tens of milliseconds for slow-moving objects.} to smooth out interference and noise, which are typically random variables with a zero mean. Let $Y_i$ be the $i$-th OFDM sample for the received signal and the corresponding estimation error is $\Delta_i$. By taking $m$ samples, we can control the average interference and noise within a threshold, i.e., 
\begin{equation}\label{eq:constraint}
\frac{\left|\frac{1}{m}\sum_{i=1}^{m} \Delta_i\right|}{|h_1|} \leq \delta.    
\end{equation}
With a proper $m$, we can estimate $h_1$ as $\hat{h}_1 = \frac{1}{m}\sum_{i=1}^m \frac{Y_i}{X_1}$ and estimate RE-level interference by subtracting the estimated signal, \( \hat{h}_1X_1 \), from the received signal as
\begin{equation}
    I = \frac{1}{m}\sum_{i=1}^m |Y_i - \hat{h}_1X_1|^2.
\end{equation}
RB-level interference is the sum of interference across 12 REs.

Suppose that the UE has $K-1$ interferers indexed from $2$ to $K$. We have the following theorem to choose a proper $m$.
\begin{theorem}\label{theorem:bennet}
When noise is negligible compared to interference, the number of reference signal REs required to enforce Equation~(\ref{eq:constraint}) with the probability of at least $1 - \epsilon$ is
\begin{equation*}
    m \geq \frac{-b^2 \log \epsilon }{\sigma^2}h^{-1}\left(\frac{b\delta}{\sigma^2}\right)
\end{equation*}
where $h^{-1}(t) = 1/((1+t)\log(1+t) - t)$, $\sigma^2 = \frac{1}{SINR}$, and $b = \max\left(\sqrt{\frac{(K-1)\sum_{j=2}^K |h_jX_j|^2}{|h_1X_1|^2}}\right)$. 
\end{theorem}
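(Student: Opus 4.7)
The plan is to recognize this as a direct application of Bennett's concentration inequality (hence the name of the theorem) to the per-sample estimation errors $\Delta_i/h_1$, and then to solve the resulting tail bound for $m$.

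First I would model the random variables. Writing $I_i = \sum_{j=2}^{K} h_j X_{j,i}$ for the $i$-th instantaneous interference and dropping the negligible noise as per the theorem hypothesis, we get $\Delta_i/h_1 = \sum_{j=2}^{K} h_j X_{j,i} /(h_1 X_1)$. Assuming the interferers' modulation symbols $X_{j,i}$ are zero-mean and independent across $j$ and across $i$ (standard for uncoordinated cells transmitting random payloads), each $\Delta_i/h_1$ has zero mean and the $\Delta_i/h_1$ are i.i.d.\ across $i$. The variance of a single term is $\mathrm{Var}(\Delta_i/h_1) = \sum_{j=2}^{K} |h_j X_j|^2 / |h_1 X_1|^2$, which is precisely $1/\mathrm{SINR} = \sigma^2$ under the noise-negligible assumption. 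For the almost-sure bound on $|\Delta_i/h_1|$, I would apply Cauchy--Schwarz: $|\sum_{j=2}^K h_j X_{j,i}|^2 \le (K-1)\sum_{j=2}^K |h_j X_j|^2$, giving the stated $b = \sqrt{(K-1)\sum_{j=2}^K |h_j X_j|^2 / |h_1 X_1|^2}$.

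Next I would invoke Bennett's inequality in the form: for i.i.d.\ zero-mean variables $Z_i$ with $|Z_i|\le b$ and $\mathrm{Var}(Z_i)=\sigma^2$,
\begin{equation*}
\Pr\!\left(\Bigl|\tfrac{1}{m}\sum_{i=1}^m Z_i\Bigr|> \delta\right)\le 2\exp\!\left(-\tfrac{m\sigma^2}{b^2}\,h\!\left(\tfrac{b\delta}{\sigma^2}\right)\right),
\end{equation*}
where $h(t)=(1+t)\log(1+t)-t$. Substituting $Z_i=\Delta_i/h_1$, the event on the left is exactly the violation of Equation~(\ref{eq:constraint}). Setting the right-hand side at most $\epsilon$ and solving for $m$ by taking logarithms yields
\begin{equation*}
m\ge \frac{-b^2\log\epsilon}{\sigma^2\,h(b\delta/\sigma^2)} = \frac{-b^2\log\epsilon}{\sigma^2}\,h^{-1}\!\left(\frac{b\delta}{\sigma^2}\right),
\end{equation*}
with $h^{-1}(t)=1/((1+t)\log(1+t)-t)$ understood as the reciprocal of $h$ (matching the statement's notation), which is the claimed bound.

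The main obstacle I anticipate is not the concentration step itself, which is essentially mechanical once the right quantities are identified, but rather justifying the probabilistic assumptions that make Bennett's applicable: zero-mean, bounded, (near-)independent per-sample errors $\Delta_i/h_1$. In particular, the independence across samples relies on the interfering symbols being effectively uncorrelated at the UE (reasonable when the $m$ REs span enough subcarriers and symbols that scrambling/precoding randomizes payload contributions), while the noise-negligible hypothesis is what lets the denominator collapse cleanly to $1/\mathrm{SINR}$. I would state these as standing assumptions for Theorem~\ref{theorem:bennet}, and also note that the factor of $2$ in the Bennett bound is absorbed into the constant (or handled by a one-sided version), which explains why it does not appear in the final expression.
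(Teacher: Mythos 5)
Your proposal is correct and follows essentially the same route as the paper's proof: identify $\tilde{\Delta}_i=\Delta_i/h_1$ as zero-mean bounded variables with variance $1/\mathrm{SINR}$, bound $|\tilde{\Delta}_i|$ by $b$ (your Cauchy--Schwarz step is the same inequality the paper invokes as AM--QM), and apply Bennett's inequality before solving for $m$. If anything, your write-up is slightly more careful than the appendix, which writes $b$ rather than $b^2$ in the Bennett exponent and silently drops the two-sided factor of $2$ that you explicitly flag.
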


\begin{proof}
    Please see Appendix~\ref{appendix:theorem} for detailed proof.
\end{proof}






In our measurement campaign, as almost all the UEs have fewer than 6 interfering BSs, we set $K = 7$. We observe that $b$ can be interpreted as $\max\left(\sqrt{\frac{K-1}{\text{RE-level SINR}}}\right)$, where the RE-level SINR is the SINR for single reference signal REs. To determine a practical $b$, we judiciously set the RE-level SINR to $-6.7$dB to support the lowest Modulation and Coding Scheme (MCS) in 5G \cite{mcs_table}. Further, we require that the communication channel gain is 10dB larger than average interference and noise for channel estimation, i.e., $\delta = 0.1$. Let $\epsilon = 0.1$. When SINR is $-6$dB (worse than 99.4\% of samples in our measurements), we need $m = 228$ reference signal REs to provide statistical guarantees for Equation~(\ref{eq:constraint}) to hold. 

To estimate channel gains at the UE side, we use periodic cell-specific RSs (CRSs) in 4G and Synchronization Signal Blocks (SSBs) in 5G. 
One 4G frame consists of 10 subframes, each including 8 cell-specific RSs per 12 subcarriers (Figure~\ref{fig:4g_rs}), whereas one SSB in 5G spans 240 subcarriers and lasts for 4 OFDM symbols. To provide statistical guarantees, we aggregate cell-specific RSs across every 12 subcarriers in three consecutive frames for 4G, totaling 240 REs, and use all REs within the SSB block for 5G, totalling 398 REs. In 4G, three frames last for 30ms, less than the typical channel coherence time for walking UEs.



\section{Characterizing Inter-Cell Interference}

\subsection{Prevalence}
We present the inter-cell interference in 4G and 5G networks from both the cell and UE perspectives. 
\begin{figure*}[t]
    \centering
    \begin{subfigure}[b]{0.32\textwidth}
        \includegraphics[width=\textwidth]{lte_interference_num_freq.pdf}
        \caption{Channels ordered by interfering cell count}
        \label{fig:lte_interference}
    \end{subfigure}
    \hfill
    \begin{subfigure}[b]{0.32\textwidth}
        \includegraphics[width=\textwidth]{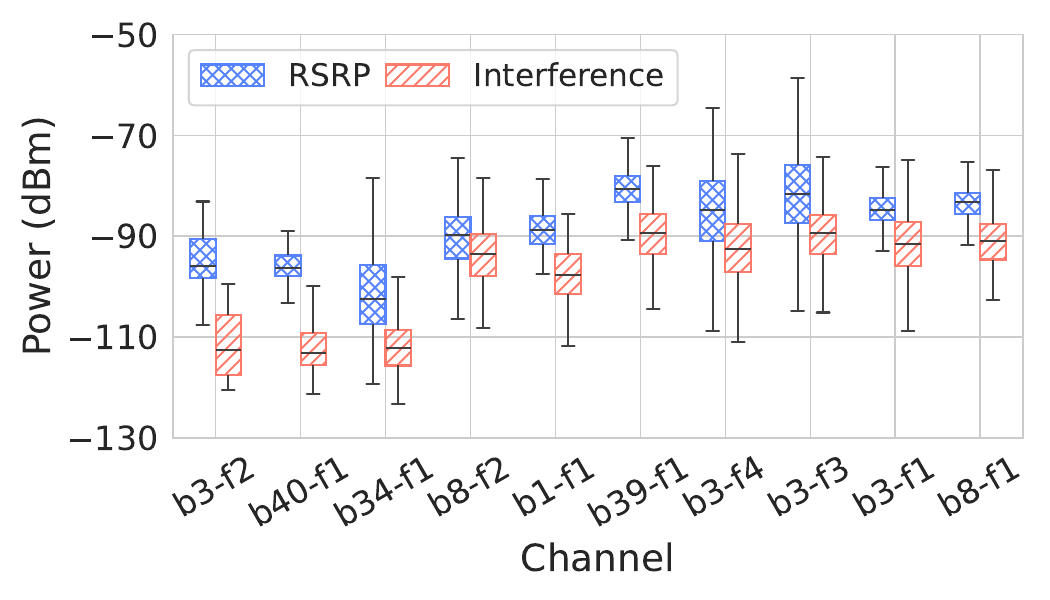}
        \caption{4G RSRP and interference}
        \label{fig:lte_rsrp}
    \end{subfigure}
    \hfill
    \begin{subfigure}[b]{0.32\textwidth}
        \includegraphics[width=\textwidth]{lte_sinr_vs_freq.pdf}
        \caption{4G SINR}
        \label{fig:lte_sinr}
    \end{subfigure}
    \caption{4G performance metrics ranked by increasing interfering cell count. Both RSRP and interference increase when network deployment is denser. However, as interference grows at a faster rate, the resulting SINR gradually declines.}
    \label{fig:lte_metrics}
    \vspace{-1em}
\end{figure*}

\begin{figure}[t]
    \centering

    \begin{subfigure}[t]{0.15\textwidth}
        \includegraphics[width=\textwidth]{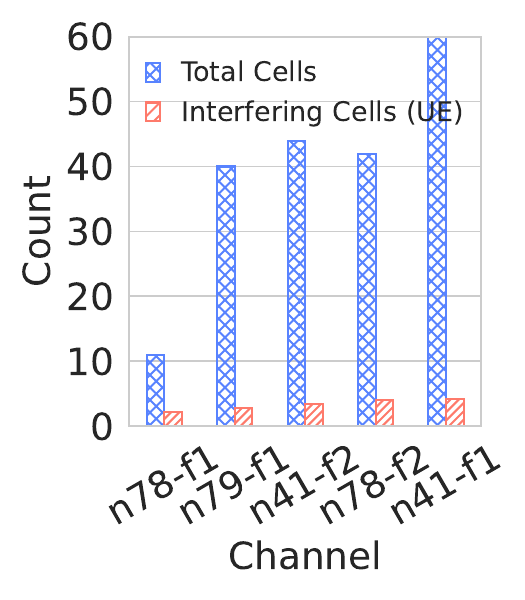}
        \caption{Interfering cell count by channels}
        \label{fig:interference}
    \end{subfigure}
    \hspace{0cm}
    \begin{subfigure}[t]{0.17\textwidth}
        \includegraphics[width=\textwidth]{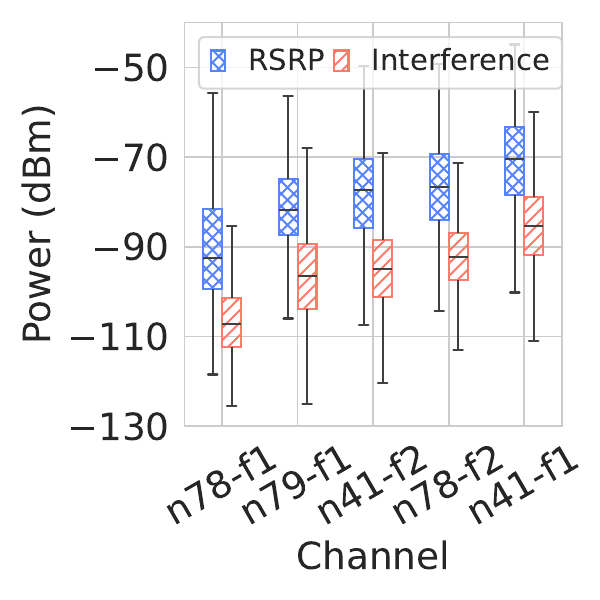}
        \caption{RSRP and interference power}
        \label{fig:5g_rsrp}
    \end{subfigure}
    \hspace{-0.2cm}
    \begin{subfigure}[t]{0.13\textwidth}
        \includegraphics[width=\textwidth]{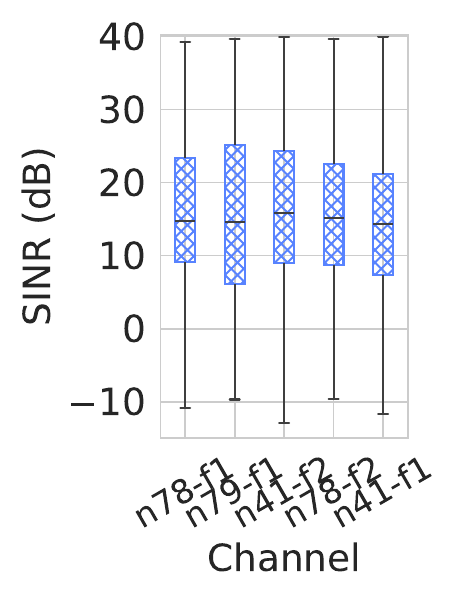}
        \caption{SINR}
        \label{fig:sinr}
    \end{subfigure}
    \caption{In 5G, both RSRP and interference increase with denser deployments. However, since they grow at similar rates, SINR remains relatively stable.}
    \label{fig:metrics}
    \vspace{-1.5em}
\end{figure}

\textbf{Cell Perspective.} 
For each cell, its total interfering cells are determined by aggregating all the interfering cells observed by its connected UEs. Since CA is excluded, each UE connects to only one BS at any given location and all other BSs on the same channel are interferers. As shown in Figure~\ref{fig:neighbor_cells_cdf_by_isp}, inter-cell interference is prevalent in both 4G and 5G networks, with all cells having at least one interfering neighbor. In 4G, over 30\% of cells have more than 10 interfering neighbors, whereas in 5G, this percentage increases to 50\%, primarily due to the denser BS deployment required for 5G. Among ISPs, ISP-1 deploys 2.5 times more 5G cells than ISPs 2 and 3 combined; however, their average cell densities per frequency band are comparable. This is because ISP-1 distributes its 5G cells across 3 bands, while ISPs 2 and 3 operate on a single shared band consisting of two 100 MHz channels (see Table~\ref{tab:band_coverage}).

\textbf{User Perspective.} The number of interfering neighbors of a cell indicates the likelihood and severity of inter-cell interference experienced by its served UEs. As shown in Figure~\ref{fig:interference_cell_vs_neighboring_cell_by_isp}, the number of interfering cells of a UE increases with the number of interfering cells observed by its serving cell in both 4G and 5G. Each UE may be affected by 25\% to 40\% of the interfering cells of its serving cell. Since more than half of the cells have at least 10 interfering neighbors in 5G (Figure~\ref{fig:neighbor_cells_cdf_by_isp}), their served UEs will experience interference from approximately 3 to 4 neighboring cells. As shown in Figure~\ref{fig:interfering_cells_cdf_by_isp}, about 40\% of ISP-1's UEs have more than 3 and 4 interfering cells in 4G and 5G, respectively. Overall, 5G UEs have a higher number of interfering cells compared to 4G UEs. 

\subsection{Impact on Channel Quality}
To analyze the impact of interference on channel quality, we evaluate SINR at both the UE and RB levels.


\textbf{UE-Level SINR.} 
As SINR depends on the signal strength and interference, we examine both impacting factors to characterize SINR, where the signal strength is measured by RSRP. Specifically, we rank the 4G channels based on the average number of interfering cells observed by UEs on each channel (Figure~\ref{fig:lte_interference}) and rank RSRP and interference across channels in the same order (Figure~\ref{fig:lte_rsrp}), where b3-f1 represents the 1st channel in frequency band b3.
We see that both RSRP and interference increase with the interfering cell count due to denser deployment. However, as the interference increases faster, the resulting SINR gradually declines.
In 4G, densification reduces rather than improves SINR. The difference in mean SINR across channels can be higher than 10 dB. The two channels (b3-f1 and b8-f1) with the lowest SINRs do not have the highest total number of cells, indicating a potential deployment issue (see \S\ref{sec:deployment} for details).


Similarly, we present the RSRP and interference across channels ranked by the average number of interfering cells per UE for 5G (Figure~\ref{fig:5g_rsrp}). Both metrics increase with cell density as in 4G. The key difference is that in 5G, RSRP increases at a rate comparable to interference, resulting in relatively stable SINR across channels. This is because even the least interfered channel in 5G has over two interfering cells per UE. When the number of interfering cells per UE exceeds a threshold, the increase rates for RSRP and interference become comparable and the SINR variations remain stable. Comparing 4G, 5G achieves 5 to 7 dB higher SINR due to stronger signal strength, yielding 3$-$5$\times$ increase in channel capacity in the high-SINR regime. Nonetheless, achieving this gain in 5G requires a 2.5$\times$ denser deployment than in 4G. As 5G is still evolving, enhancing interference management offers a more cost-effective approach to further improving channel quality (see \S\ref{sec:resource_allocation} for details).

\textbf{RB-Level SINR.} The multipath effects in wireless channels leads to frequency-selective fading, where channel gains vary across frequencies.
Due to the independence between communication and interference channels, frequencies with weak communication channel gains may experience strong interference, resulting in significantly low RB-level SINRs. As RB-level SINR is easier to measure in 4G due to its dense and periodic RS patterns, our measurement focuses on 4G, but the conclusions generalize to 5G. 
Figure~\ref{fig:freq_sel} illustrates the RB-level frequency-selective channel gains over 3,500 OFDM symbols measured at a random UE location, which reveals varying gains across frequencies. This pattern typically persists over the channel coherence time, which decreases with UE mobility. Both communication and interference channels may experience frequency-selective fading. Extreme RB-level SINRs occur when a weak communication channel gain collides with a strong interference channel gain on certain RBs.
\begin{figure}[!t]
    \centering
    \includegraphics[scale=0.25]{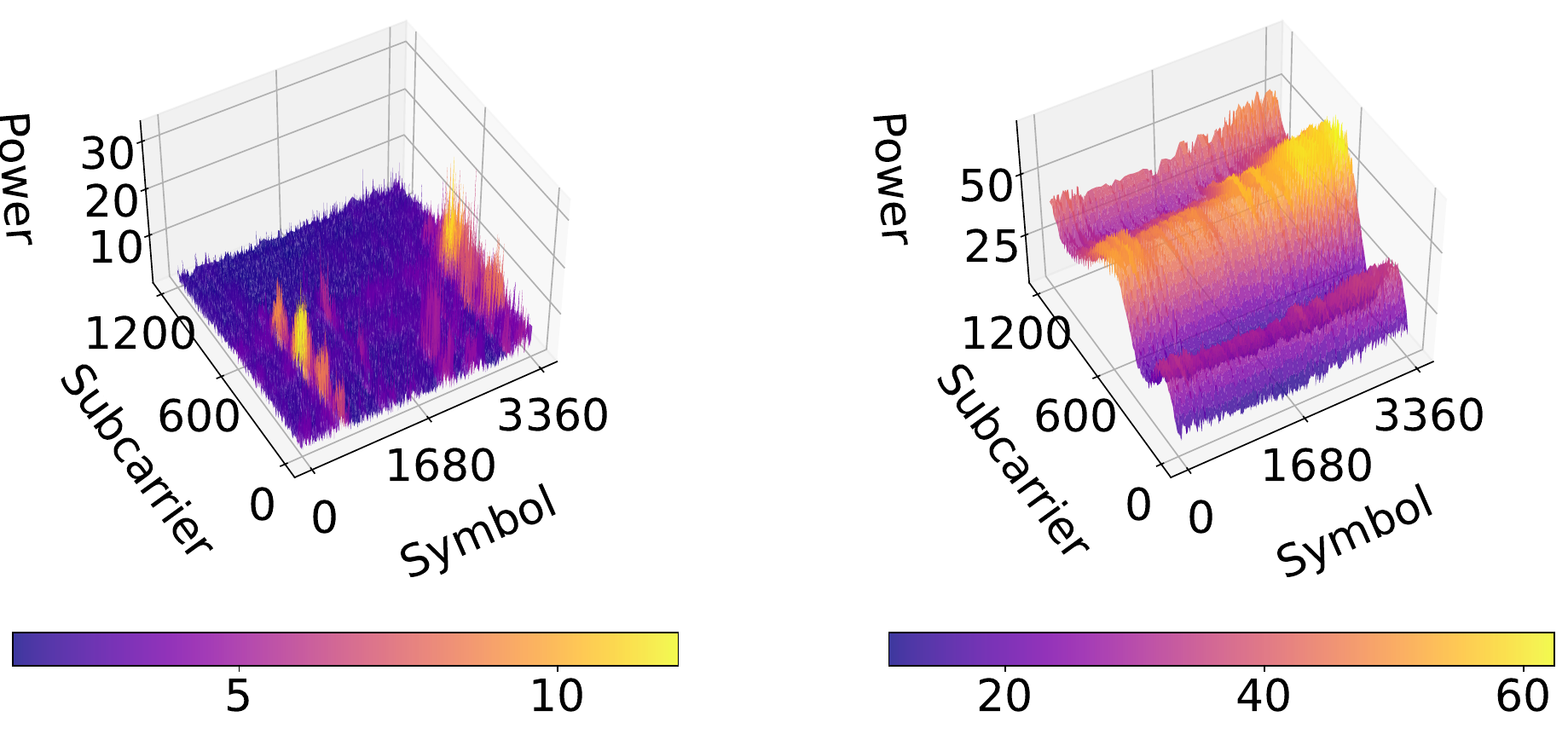}
    \caption{RB-level interference (left) and RSRP (right) in 4G measured at a random UE location. Due to frequency-selective fading, channel gains vary across frequencies.}
    \label{fig:freq_sel}
\end{figure}

\begin{figure}[t]
    \centering
    \begin{subfigure}[t]{0.23\textwidth}
    \includegraphics[width=\textwidth]{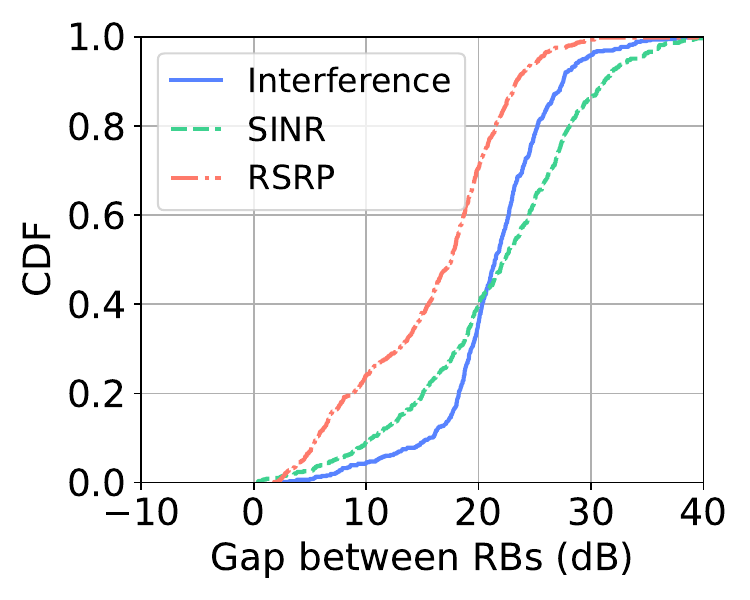}
        \caption{SINR gap between RBs}
        \label{fig:sinr_across_rbs}
    \end{subfigure}
    \hspace{0cm}
    \begin{subfigure}[t]{0.23\textwidth}
        \includegraphics[width=\textwidth]{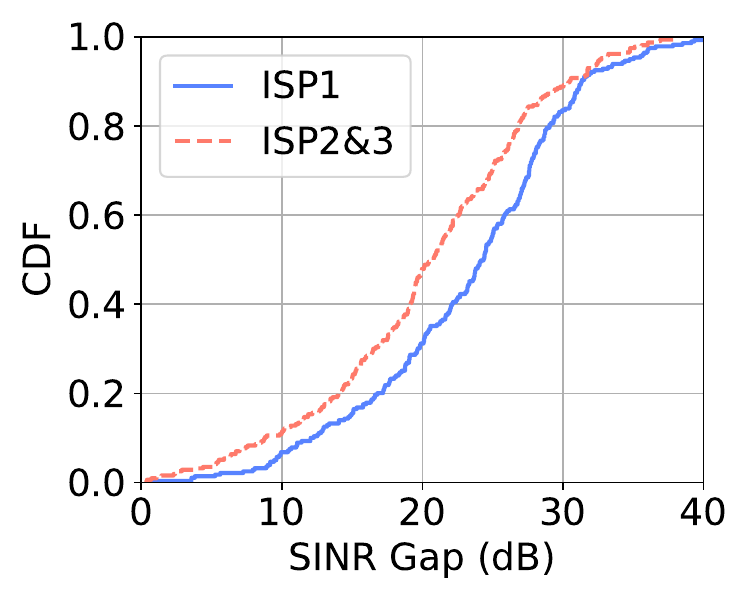}
        \caption{SINR gaps by ISPs}
        \label{fig:SINR_isp_mean_min_cdf}
    \end{subfigure}
    \caption{Distribution of SINR gaps for UEs. Frequency-selective fading leads to varying interference and RSRP gaps at different UE locations. Extremely low RB-level SINRs occur when strong interference aligns with weak desired signals.}
    \vspace{-1em}
\end{figure}

To characterize the impact of frequency-selective fading, we define the \emph{SINR gap across RBs} as the difference between the maximum and minimum RB-level SINRs. Similarly, we compute the RSRP and interference gaps for each UE location. Our measurements show a wide spread of both interference and RSRP gaps, with over 50\% of UEs experiencing gaps greater than 15 dB (Figure~\ref{fig:sinr_across_rbs}). As SINR is the ratio of RSRP to interference, it has a broader distribution, with more than 10\% of UEs showing SINR gaps exceeding 30 dB, and some exceeding 40 dB. UEs across different ISPs show similar SINR gaps, though ISP1 has slightly more extreme SINR gaps (Figure~\ref{fig:SINR_isp_mean_min_cdf}). The impact is also observed across different channels, where channels with higher interference tend to have lower mean and minimum RB-level SINR values (not shown).


The significant variations in RSRP, interference, and SINR across RBs have two key implications: (1) performing time-frequency resource allocation based on per-RB SINR can more effectively utilize spectrum resources and improve user channel quality; (2) handover decisions based on UE-level metrics or limited frequency observations often result in suboptimal outcomes. This limitation is evident in existing 4G and 5G networks, where handovers are typically based on RSRP measured at the UE level and within the frequency range of the SSB, respectively.



\begin{figure}[!t]
    \centering
    \begin{subfigure}{0.22\textwidth}
        \centering
        \includegraphics[scale=0.29]{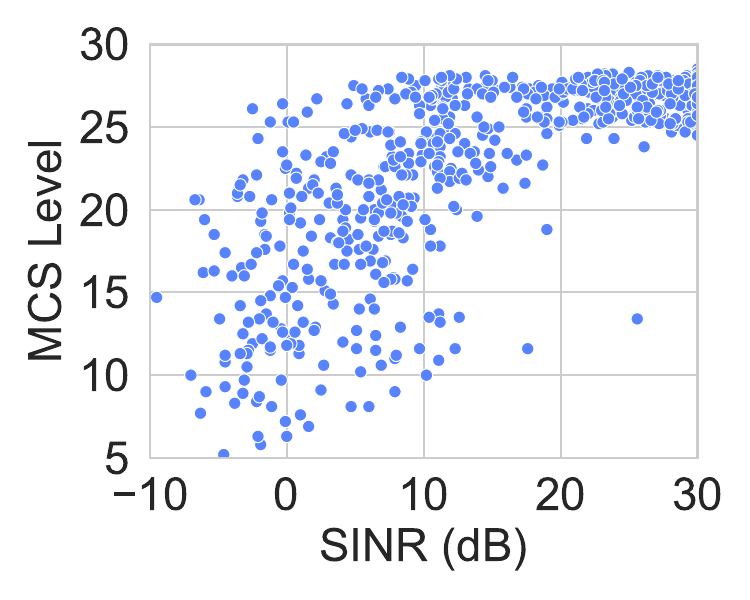}
        \caption{MCS vs. SINR}
        \label{fig:sinr_mcs}
    \end{subfigure} 
    \hfill
    \begin{subfigure}{0.22\textwidth}
        \centering
        \includegraphics[scale=0.29]{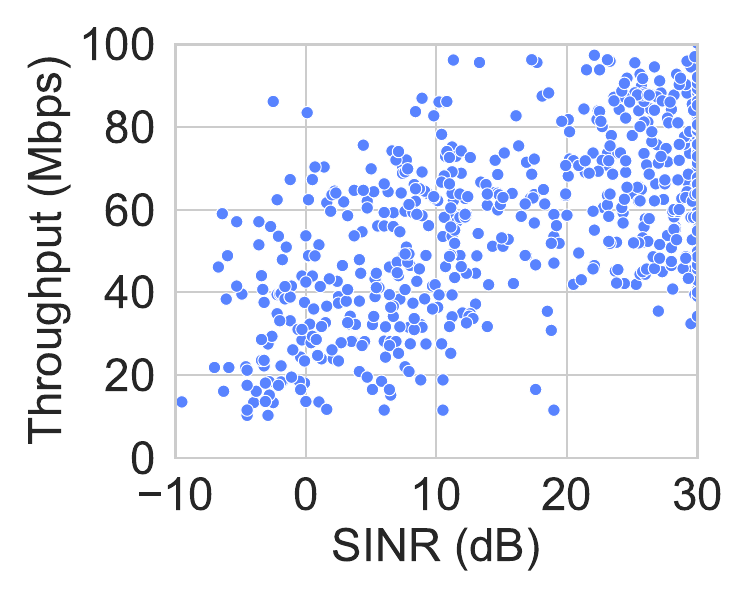}
        \caption{Throughput vs. SINR}
        \label{fig:sinr_throughput}
    \end{subfigure}
    \caption{Impact of SINR on throughput}
\end{figure}

\textbf{Impact of SINR on Throughput.}
We collect measurements of MCS and throughput under different SINRs in 4G using a smartphone. The results show that, before reaching the MCS ceiling, both the average MCS and throughput increase approximately linearly with SINR (in dB). From a network-wide perspective, the average SINR difference across 4G channels can reach up to 10 dB (Figure~\ref{fig:lte_sinr}), resulting in a throughput gap of about 20 Mbps (Figure~\ref{fig:sinr_throughput}). Moreover, RB-level resource allocation can significantly improve user SINR. In 4G, the gap between the best and worst RBs can exceed 40 dB in SINR, spanning more than 30 MCS levels (Figure~\ref{fig:sinr_mcs}).

\begin{figure}[t]
    \begin{minipage}[b]{0.24\textwidth}
    \centering
    \includegraphics[scale=0.3]{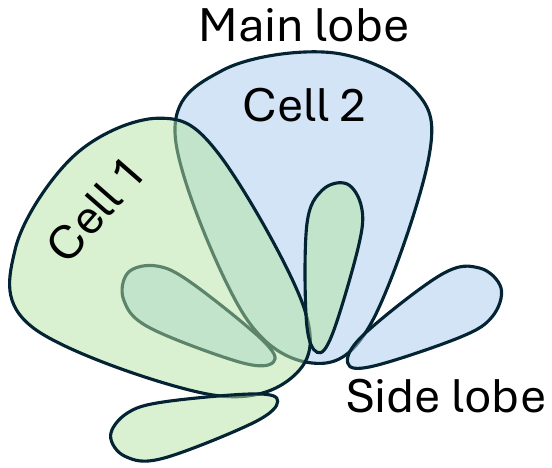}
    \caption{Intra-BS ICI}
    \label{fig:inter_beam_interference}
    \end{minipage}
    \hfill
    \begin{minipage}[b]{0.24\textwidth}
    \centering
    \resizebox{0.9\textwidth}{!}{
    \begin{tabular}{cccc}
    \toprule
    View & ISP & 4G & 5G  \\ 
    \midrule
    \multirow{2}{*}{BS-level} & 1 & 0.31 & 0.60\\
    \cline{2-4}
                          & 2\&3 & 0.29 & 0.62\\
    \cline{2-4}
                          & Total & 0.30& 0.61\\
    \hline
    \multirow{2}{*}{UE-level} & 1 & 0.36 &0.69 \\
    \cline{2-4}
                        & 2\&3 & 0.40 & 0.66\\
    \cline{2-4}
                        & Total & 0.39& 0.68\\
    \bottomrule
    \end{tabular}
    }
    \captionsetup{type=table} 
    \caption{Intra-BS ICI ratio}
    \label{fig:intra_vs_inter}
    \end{minipage}
\end{figure}

\subsection{Dissecting Inter-Cell Interference}
A physical BS can comprise multiple sectorized cells, each covering a different direction. Interference between cells from the same BS is termed \emph{intra-BS inter-cell interference (intra-BS ICI)}, while interference between cells from different BSs is termed \emph{inter-BS ICI}. 

\textbf{Intra- vs. Inter-BS ICI.}
As shown in Figure~\ref{fig:inter_beam_interference}, UEs in cell 1 may experience interference from both the main and side lobes of cell 2. To ensure complete signal coverage, neighboring beams typically have some overlap, but excessive overlap can introduce significant interference and degrade signal quality. In our measurement, we find that cells belonging to the same BS commonly have consecutive PCIs and use this observation to determine cells within the same BS. As shown in Table~\ref{fig:intra_vs_inter}, 30\% and 61\% of BSs in 4G and 5G include interfering cells, respectively. Consequently, for ISP-1, 36\% of 4G UEs experience intra-BS ICI, while 40\% of 5G UEs experience it. 
We notice that BSs not having interfering cells are typically rooftop-mounted. Unlike tower deployments, rooftop installations provide greater flexibility in antenna panel placement, allowing for increased separation and reduced interference.


\begin{figure*}[t]
    \centering
    \begin{subfigure}{0.3\textwidth}
        \centering
        \includegraphics[scale=0.35]{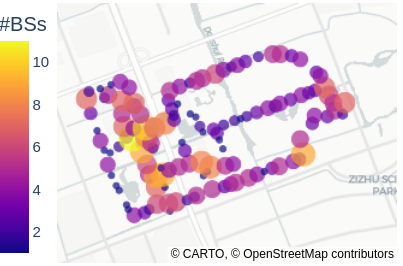}
        \caption{Campus-scale 4G BS density}
        \label{fig:bs_geo_4g}
    \end{subfigure}
    \hfill
    \begin{subfigure}{0.3\textwidth}
        \centering
        \includegraphics[scale=0.35]{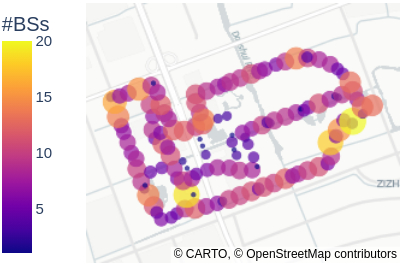}
        \caption{Campus-scale 5G BS density}
        \label{fig:bs_geo_5g}
    \end{subfigure}
    \hfill
    \begin{subfigure}{0.3\textwidth}
        \centering
        \includegraphics[scale=0.33]{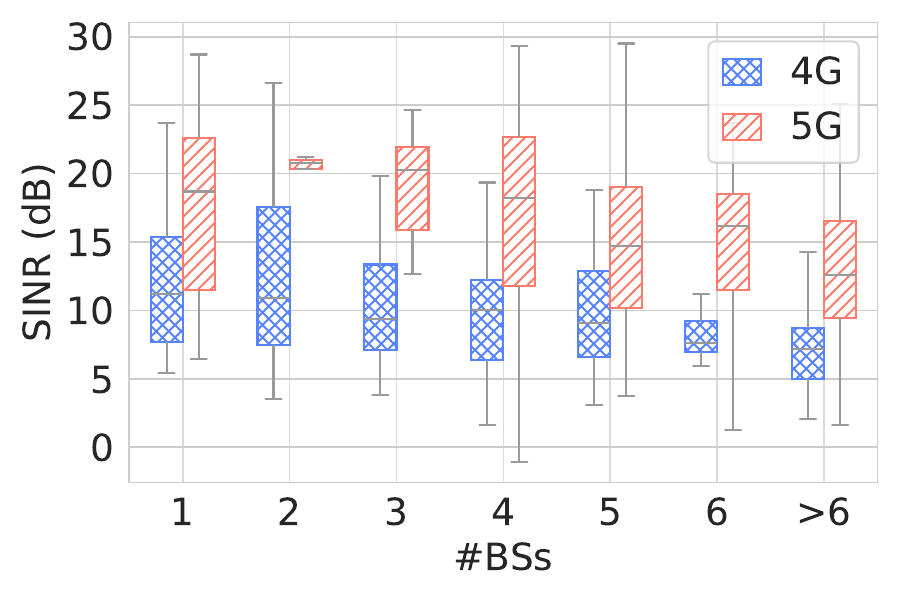}
        \caption{SINR vs. BS density}
        \label{fig:sinr_vs_bs_density}
    \end{subfigure}
    \caption{Geographical density distribution of 4G/5G BSs and its impact on SINR (from UE perspective).}
    \vspace{-0.5em}
\end{figure*}

\textbf{Impact of Intra-BS ICI.} To understand the impact of intra-BS ICI, for UEs experiencing intra-BS ICI, we compute the difference in RSRP between the serving cell and the interfering cell from the same BS, termed \emph{Delta RSRP}. In our measurements, about 30\% of UEs have a Delta RSRP below 6 dB in both 4G and 5G, indicating strong intra-BS ICI. These UEs experience average SINRs at least 5 dB lower than those with no or minimal intra-BS ICI.

\section{Mitigating Inter-Cell Interference}
This section discusses the potential issues and solutions for inter-cell interference from four perspectives.

\subsection{Deployment}
\label{sec:deployment}

BS deployment involves complex trade-offs across multiple dimensions, including deployment cost, geographical constraints, user traffic demand, channel quality and coverage. We discuss two deployment issues with potential improvements.

\textbf{Issues due to Compatibility with 4G.} 
Due to cost constraints, 5G BS deployment typically takes into account existing 4G infrastructure. Prior to 5G, 4G addresses high traffic demand through dense deployment. However, due to the large interference range of 4G, network capacity improvement often comes at the cost of degraded signal quality. Considering the deployment cost, 5G deployment must maintain compatibility with densely deployed 4G infrastructure, which slows down the upgrade process and prevents 5G from realizing its full potential in high-density areas. To better understand the geographical distribution of BSs, we divide the campus area evenly into 100$\times$100 m$^2$ grids. Figure~\ref{fig:bs_geo_4g} and~\ref{fig:bs_geo_5g} show the total number of cells in each grid across the campus for 4G and 5G, respectively.
There is a clear difference in cell density on both sides of the boundary between the west and east campuses for 4G. The dense areas in the west campus are primarily students' dorm and classroom areas, while most of the east campus are spacious areas for laboratories and offices. This indicates that the BS deployment density is related to the population density. The dense deployment of 4G BSs aims to support higher traffic demands. 5G network also exhibits high BS density in the west campus, but its top high-density regions differ from those of 4G. Given its later deployment, 5G network may have been strategically provisioned to complement existing 4G coverage. However, the dense deployment of 4G BSs results in a significant SINR decrease, while the same density of 5G BS deployment boosts the SINR by 10 dB on average (Figure~\ref{fig:sinr_vs_bs_density}).



\subsection{Channel Assignment} 
\begin{figure}[t]
    \centering
    \begin{subfigure}{0.15\textwidth}
        \centering
        \includegraphics[width=\textwidth]{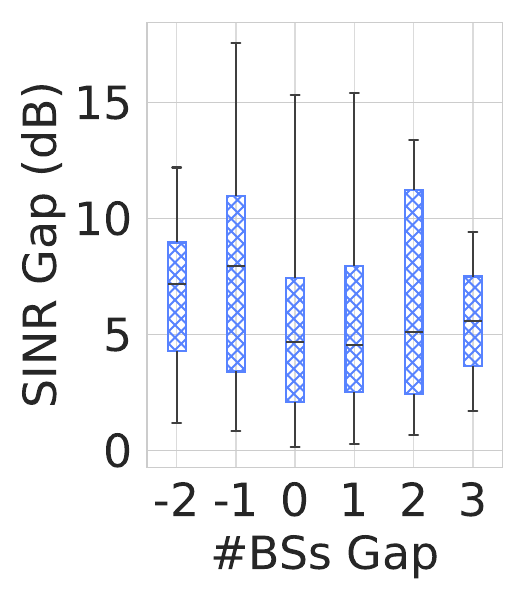}
        \caption{Gap between \\n79-f1 and n41-f1} 
        \label{fig:bs_sinr_gap_isp1}
    \end{subfigure}
    \hspace{0cm}
    \begin{subfigure}{0.15\textwidth}
        \centering
        \includegraphics[width=\textwidth]{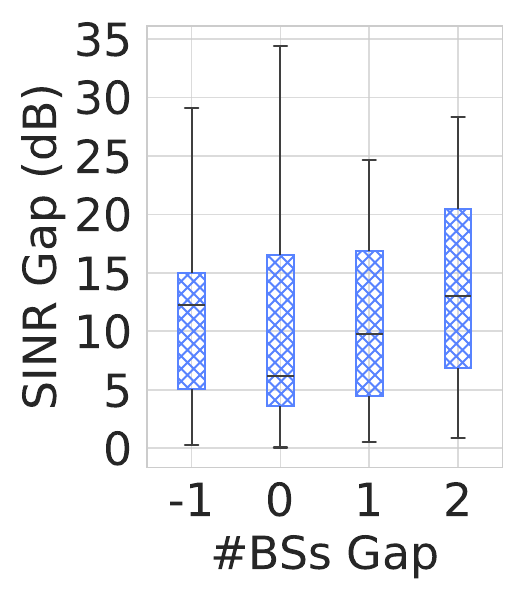}
        \caption{Gap between\\ n78-f2 and n78-f1} 
        \label{fig:bs_sinr_gap_isp2}
    \end{subfigure}
    \hspace{0cm}
    \begin{subfigure}{0.15\textwidth}
        \centering
        \includegraphics[width=\textwidth]{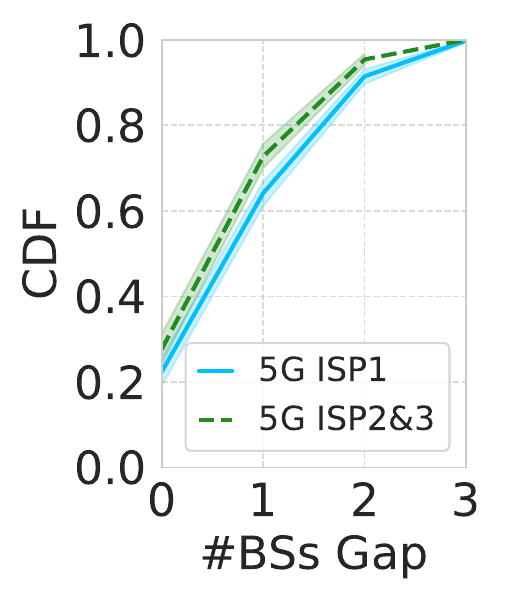}
        \caption{\#BS gap distribution} 
        \label{fig:bs_gap_cdf}
    \end{subfigure}
    \caption{Comparison of SINR and BS imbalance across ISPs.} %
    \label{fig:sinr_bs_gap}
\end{figure}

Channel allocation is one of the most effective approaches to mitigating interference. Although both 4G and 5G offer a limited number of available channels, there exists a significant imbalance in how BSs are distributed across channels. As shown in Table~\ref{tab:band_coverage}, 85\% of 4G cells are concentrated on just 50\% of the available channels. This imbalance leads to uneven interference levels and significantly different SINRs across channels (Figure~\ref{fig:lte_metrics}). To better understand the impact of channel allocation, we analyze how channel allocation for cells in each grid affects SINR, where each grid is of size 100$\times$100 m$^2$ as before. For each grid, we compare two channels by computing the difference in the number of associated cells, defined as $a - b$, and the corresponding difference in average SINR. As shown in Figures~\ref{fig:bs_sinr_gap_isp1} and~\ref{fig:bs_sinr_gap_isp2}, for ISP-1 (n79-f1 vs. n4-f1) and ISP-2\&3 (n78-f2 vs. n78-f1), the average SINR difference reaches the minimum when the number of cells on the two channels is balanced (i.e., $a - b = 0$), while uneven channel allocation results in SINR imbalance and a decrease in overall SINR performance.


\subsection{Time-Frequency Resource Allocation} 
\label{sec:resource_allocation}

UEs may experience inter-cell interference when two cells transmit on the same channel. However, such interference can be mitigated — or even avoided — through fine-grained time-frequency resource coordination between BSs. To investigate whether existing cellular networks implement such coordination between BSs, 
we monitor and analyze the traffic of two inter-BS interfering 4G cells simultaneously using two time-synchronized USRPs. In our measurement, BSs tend to allocate RBs from the lowest frequencies upward, rather than distributing them evenly across the band. As a result, the lower portion of the frequency band consistently experiences higher levels of interference, whereas the upper RBs tend to see relatively less interference. Figure~\ref{fig:lte_prb_heatmap} shows RB usage over a 20-minute trace, where each RB is highlighted for the subframes in which it is used. It can be clearly seen that BSs allocate time-frequency resources in order from lower-indexed RBs to higher-indexed ones. A more detailed usage distribution of RBs is shown in Figure~\ref{fig:PRB_occur_rate}. The first 6 to 10 RBs of OFDM symbols have much higher utilization than other RBs, because they are mainly used to include control information for initial access procedure. Given the frequency-selective nature of wireless channels, allocating RBs based on fixed index preferences makes it difficult to fully utilize spectrum resources. More importantly, neighboring cells tend to collide on the lower-indexed RBs, even when the overall frequency resource utilization is low, significantly degrading the entire network performance. 

When two interfering cells transmit on the same RB of a channel, inter-cell interference inevitably occurs on that RB. We compute the collision probability of RB selection as follows. Suppose that two interfering cells use the RB sets 
$A$ and $B$ at the same time. The collision probability of RBs in $A$ can then be calculated as
\begin{equation*}
    \text{Collision Probability} = \frac{|A \cap B|}{|A|},
\end{equation*}
where $|A|$ is the cardinality of set $A$. The collision probability of RBs in $A$ is shown in Figure~\ref{fig:heatmap}. The diagonal entries represent the collision probabilities when the two cells have similar RB utilization levels. When the utilization is below 20\%, collisions mainly occur for the control traffic in the PDCCH. Even after accounting for the traffic data in the PDSCH, the collision probability remains above 64\%. Such a high collision probability cannot be attributed to periodic RSs in the PDSCH, as they occupy only 5\% of the PDSCH resources. This confirms that the preference for allocating RBs from lower to higher frequencies leads to severe RB collisions and consequently inter-cell interference, even when the resource utilization is low.
\begin{figure}[t]
    \centering
    \hspace{-0.2cm}
    \begin{subfigure}{0.22\textwidth}
        \centering
        \includegraphics[scale=0.32]{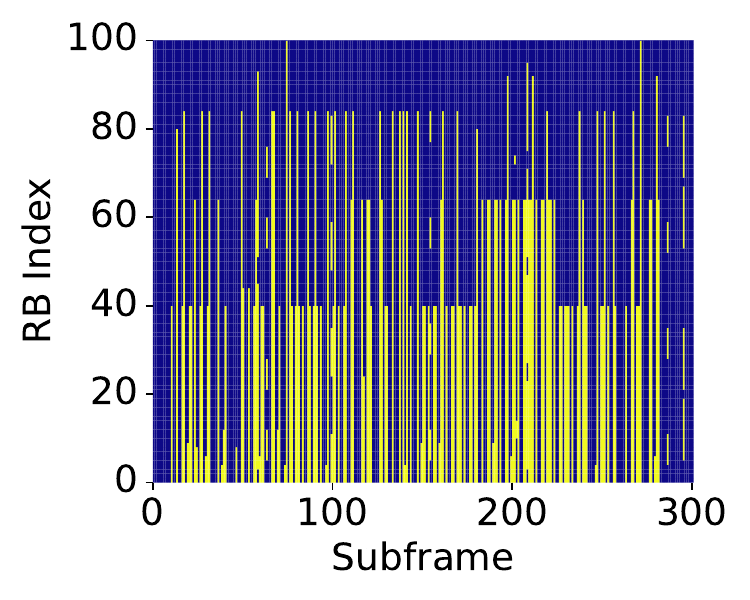}
        \caption{RB usage preference}
        \label{fig:lte_prb_heatmap}
    \end{subfigure}
    \hspace{0.25cm}
    \begin{subfigure}{0.22\textwidth}
        \centering
        \includegraphics[scale=0.32]{PRB_occur_rate.pdf}
        \caption{RB utilization distribution}
        \label{fig:PRB_occur_rate}
    \end{subfigure}
    \caption{BSs prioritize allocating resource blocks in ascending order, from lower-indexed RBs to higher-indexed ones.}
    \vspace{-0.5em}
\end{figure}

\begin{figure}[!t]
    \centering
    \includegraphics[scale=0.25]{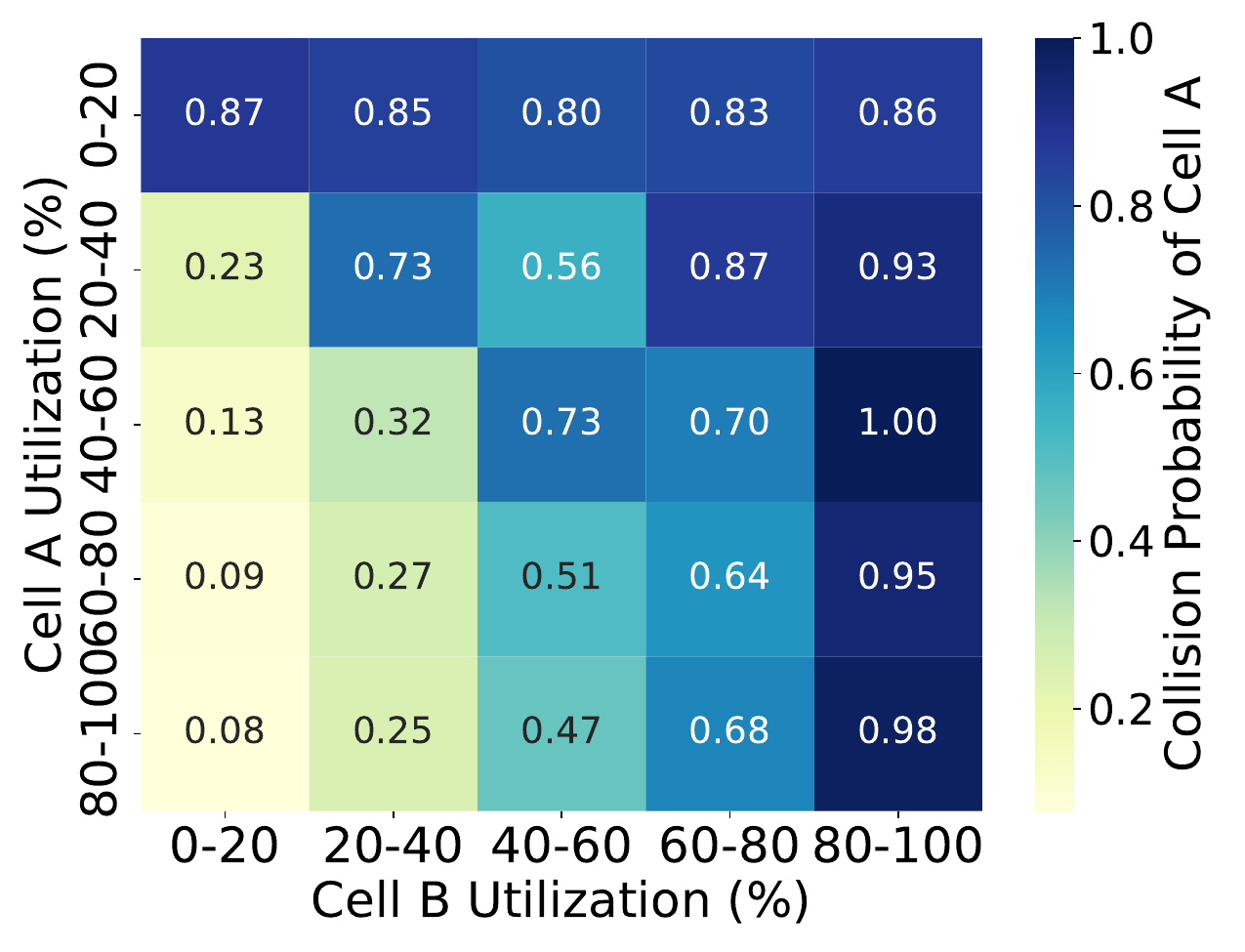}
    \caption{Collisions occur with high probability even when the resource utilization is low.}
    \label{fig:heatmap}
    \vspace{-0.5em}
\end{figure}

\subsection{Configuration: PCI Collision} 

\textbf{Prevalence.} In 4G and 5G networks, each cell is assigned a PCI, which determines the positioning of its reference signals. The CRS in 4G and the DeModulation Reference Signal (DM-RS) in 5G both follow a PCI-dependent positioning. Specifically, the RS position of a cell is selected based on the position index, computed as PCI modulo 
$X$, where $X$ denotes the number of available RS patterns. A PCI collision occurs when two cells share the same RS position index, i.e., RSs are mapped to the same resource elements within a slot, resulting in persistent co-channel interference. Since RSs are used for channel estimation and demodulation, neighboring cells typically adopt different RS positions to reduce mutual interference and improve estimation accuracy. However, our measurements reveal that a majority of 4G and 5G cells have a large number of interfering cells (Figure~\ref{fig:prevalence}), much greater than the number of available RS positions. As a result, PCI collisions occur quite often in existing 4G and 5G networks, degrading the accuracy of channel estimation. 
In our measurements, we detect the PCI collision at each UE location and compute the PCI collision probability as the percentage of locations where a collision is detected. As shown in Figure~\ref{fig:conllision_ratio}, the collision probability increases with the number of interfering cells. In 4G, the collision probability exceeds 90\% when the number of interfering cells reaches 5. Due to the denser deployment, the number of interfering neighbors in 5G is significantly higher than that in 4G, resulting in a 100\% collision probability when the number of interfering cells exceeds 10. When counting collided cells observed by UEs, we find that about 60\% of 4G UEs and 70\% of 5G UEs experience at least one PCI collision (Figure~\ref{fig:collision_count_cdf}).


\begin{figure}[t]
    \centering
    \begin{subfigure}{0.21\textwidth}
        \centering
        \includegraphics[scale=0.3]{collision_ratio.pdf}
        \caption{Collision probability}
        \label{fig:conllision_ratio}
    \end{subfigure}
    \hspace{0.2cm}
    \begin{subfigure}{0.23\textwidth}
        \centering
        \includegraphics[scale=0.3]{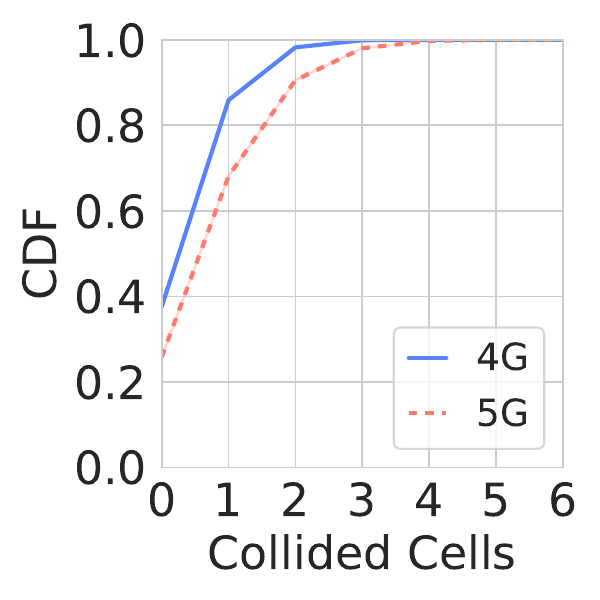}
        \caption{Distribution of collided cells}
        \label{fig:collision_count_cdf}
    \end{subfigure}
    \caption{Prevalence of PCI collisions}
    \vspace{-1em}
\end{figure}

\begin{figure}[t]
    \centering
    \begin{subfigure}{0.23\textwidth}
        \centering
        \includegraphics[scale=0.33]{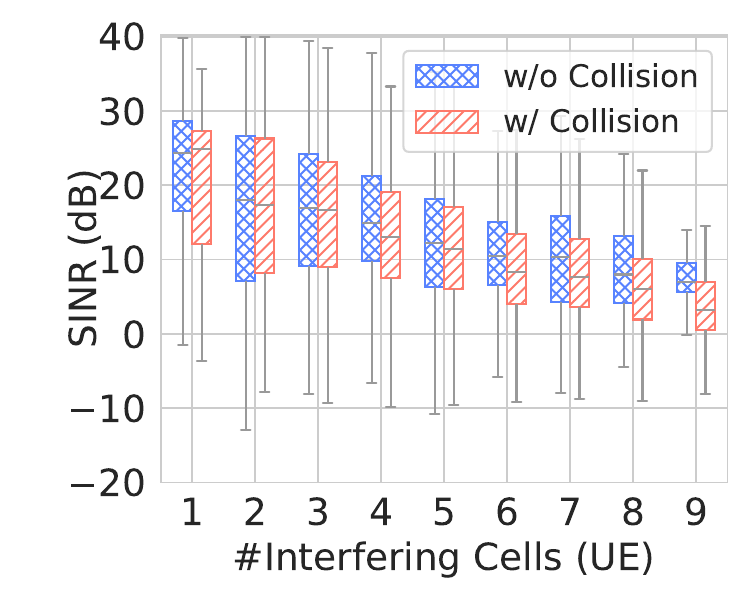}
        \caption{5G PCI collision on SINR}
        \label{fig:5g_collision_sinr}
    \end{subfigure}
    \hfill
    \begin{subfigure}{0.23\textwidth}
        \centering
        \includegraphics[scale=0.33]{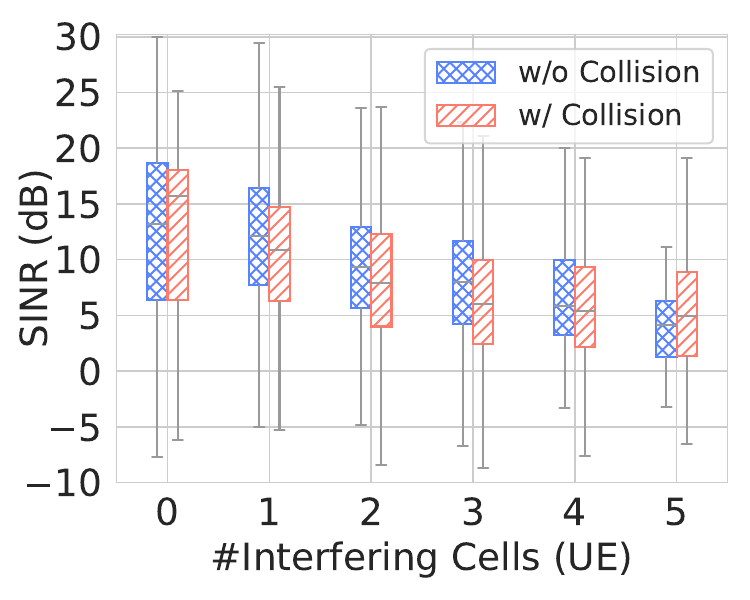}
        \caption{4G PCI collision on SINR}
        \label{fig:lte_collision_sinr}
    \end{subfigure}
    \caption{Impact of PCI collisions on SINR}
    \vspace{-1em}
\end{figure}

\begin{figure*}[t]
    \begin{minipage}[t]{0.3\textwidth}
        \centering
        \includegraphics[scale=0.35]{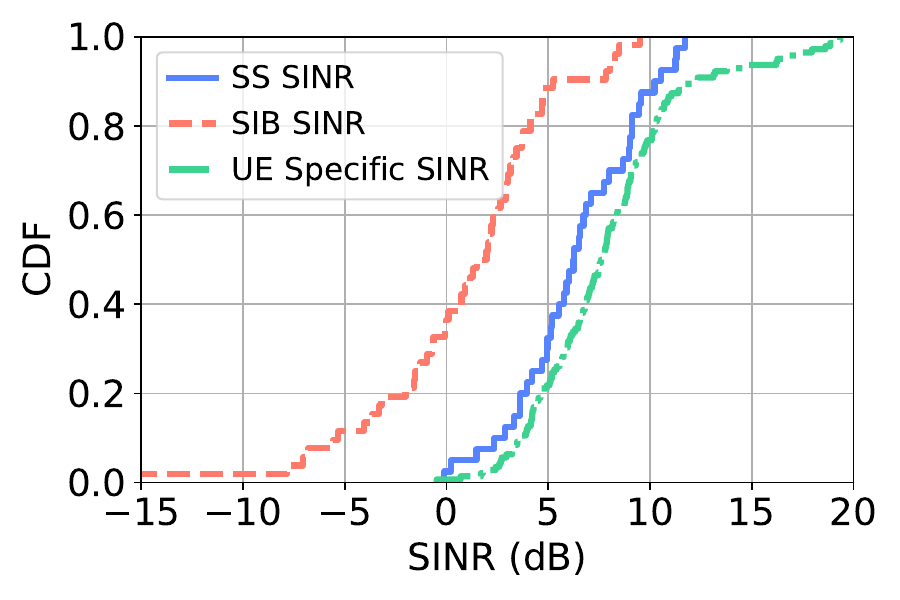}
        \caption{SINR by message types}
        \label{fig:SINR_type_cdf}
    \end{minipage}
    \hfill
    \begin{minipage}[t]{0.6\textwidth}
        \centering
        \includegraphics[scale=0.35]{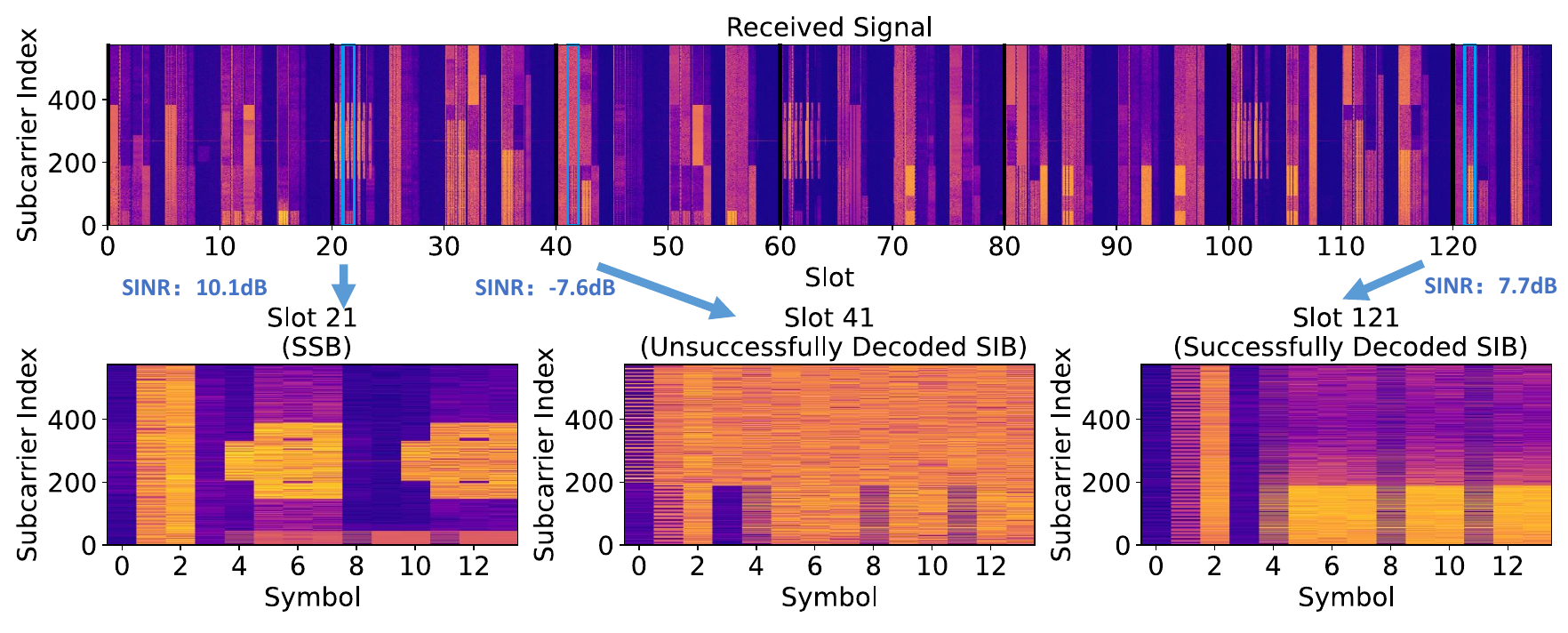}
        \caption{
        An example of unsuccessful SIB decoding
        }
        \label{fig:rxGrid_zoom}
    \end{minipage}
    \hfill
    \vspace{-1em}
\end{figure*}



\textbf{Impact on SINR.} 
We further evaluate the impact of prevalent PCI collisions on SINR. Specifically, we divide UEs in two groups, those affected by PCI collisions and those unaffected, and compare their SINRs under various interference levels. As shown in Figure~\ref{fig:5g_collision_sinr}, UEs with PCI collisions consistently experience lower average SINR than those without across all interference levels in 5G. When the number of interfering cells reaches 6, the SINR gap between the two groups of UEs increases to 4 dB. As previously shown in Figure~\ref{fig:interfering_cells_cdf_by_isp}, more than 10\% of 5G UEs experience at least 6 interfering cells. It is noticeable that the SINR gap between the two UE groups increases with the number of interfering cells. This is because when the number of interfering cells is small, there exists only one or two PCI-collided cells, which may not be strong interferers. However, as the number of interfering cells increases, the likelihood of colliding with strong interfering cells grows, leading to a larger SINR gap between the UE groups. A similar SINR gap is also observed between the two UE groups in 4G. However, since the number of PCI-collided cells in 4G is lower than in 5G (Figure~\ref{fig:collision_count_cdf}), the resulting SINR gap due to PCI collisions is smaller (Figure~\ref{fig:lte_collision_sinr}).

\textbf{Impact on the Signaling Process.}
The process of user access to the network generally consists of several stages, including synchronization, system information acquisition, connection establishment, and data transmission. Since RSs are configured differently for messages at each stage, SINR varies significantly across message types. A critical consequence of this is that the demodulation of messages with low SINR can become a bottleneck in the entire communication process. We consider three message types, i.e., SS, SIB-1, and UE-specific data, and compute the SINR for messages at all UE locations. As shown in Figure~\ref{fig:SINR_type_cdf}, SIB consistently experiences much lower SINR than SS, which in turn has lower SINR than user-specific data. The root cause is the limited number of DMRS placement options available for SIB. In our measurements, only two options are observed for SIB, compared to four for SS. This reduced flexibility increases the likelihood of collisions and leads to degraded SINR for SIB. In contrast, UE-specific data is randomly scheduled across the time-frequency grid, making interference less persistent and often absent, resulting in the highest SINR among the three. Since SS SINR is commonly used to measure UE channel quality, the SINR gap between message types leads to two key issues: 1) SIB would inadvertently suffer from extremely low SINR and 2) the SINR is underestimated for user-specific data.
Figure~\ref{fig:rxGrid_zoom} presents a real-world example where a SIB fails to decode due to low SINR, with PCI collisions being a primary contributing factor.
\section{Related Work}
\textbf{4G/5G Measurements.} Extensive studies \cite{5g_midband,5g_2020,5g_2021,5g_mmwave} have measured 4G/5G networks from various aspects, such as coverage, latency, throughput, energy efficiency, and application performance. However, none of these works characterize the impact of inter-cell interference on network performance. A limited body of work has addressed topics related to inter-cell interference, but no in-depth analysis has been conducted. In~\cite{ca24,ca23}, CA strategies have been designed in multi-cell scenarios, but interference is not considered in their proposed solutions. Additionally, \cite{lteye}~discusses the significant difference between SINR and SNR, yet fails to analyze the root causes. \cite{5g_reliability} acknowledges that interference may increase the cellular failure rate, but do not conduct a detailed investigation. To the best of our knowledge, there exists no comprehensive study on inter-cell interference in 4G and 5G networks. Our work fills this gap by conducting the first measurement study of inter-cell interference.

\textbf{Inter-Cell Interference Management.} 
Interference management in 4G and 5G networks relies heavily on RSs for interference measurement~\cite{csi_rs}. Extensive works \cite{pilot_v2x,pilot_numerology,pilot_urllc} have explored the design of RS patterns to meet various requirements of services in the 5G network. Building on the interference measurement, several studies \cite{interference_management_1,interference_management_2,interference_management_3} have proposed interference management schemes. While interference is measurable, most existing systems fail to account for inter-cell interference. For instance, \cite {dynamic_tdd} focuses solely on dynamic TDD configurations, which can lead to inter-link interference due to the uplink and downlink transmission interactions, but does not address the resulting interference. Similarly, studies on network slicing and scheduling strategies \cite{5g_slicing,slicing_24} do not consider how multi-cell coordination can help reduce interference. Our measurement study highlights the importance of considering inter-cell interference in network design.

\textbf{Network Deployment and Configuration.}
Some studies~\cite{deployment_miami,5g_deployment,5g_beam,deployment_bandwidth} show that different BSs exhibit varying characteristics depending on deployment and configuration factors such as access technology, frequency bands, and ISPs. For example, \cite{5g_deployment} compares different system information configuration schemes used by various ISPs, showing that while each provider may have its own allocation strategy, they tend to apply the same SIB and SSB allocation strategy across all of their managed BSs. These deployment and configuration factors can have potential impact on inter-cell interference, though their impacts are not evaluated in existing studies. Our study conducts an in-depth analysis of deployment and configuration issues to mitigate inter-cell interference.

\section{Discussion}

\textbf{Application-Level Performance.}
Many existing studies\cite{deployment_bandwidth,5g_midband,ca24} have investigated the impact of SINR and upper-layer protocols on application  performance. In our work, the focus is primarily on the physical layer performance, using SINR and physical-layer throughput to reflect the impact of inter-cell interference. Moreover, during practical BS deployment and optimization, metrics such as interference and received signal strength provide a more accurate and direct representation of the underlying network performance. With physical-layer performance ensured, further optimization can then be carried out at upper layers.

\textbf{Limitations in Measurement Methodology.}
Unlike 4G, 5G does not transmit periodic RSs across all frequencies. Aside from system information such as SSB and SIB, user-specific data in 5G is distributed sparsely and randomly across the spectrum. Characterizing RB-level channel characteristics in a 100 MHz 5G system is therefore a challenging task, subject to future research.

\section{Conclusion}
We present a first comprehensive measurement study of inter-cell interference in operational 4G and 5G networks. Our results reveal that inter-cell interference is prevalent in both 4G and 5G networks and significantly impacts signal quality and throughput. Using a carefully designed method for fine-grained channel estimation, we characterize inter-cell interference under frequency-selective fading, exposing limitations in the existing handover and resource allocation strategies. Additionally, we analyze the root causes of inter-cell interference from four perspectives and find that the current resource allocation scheme results in high probability of collisions in time-frequency resource usage between base stations. Moreover, the configuration of reference signals causes certain types of message exchanges to become bottlenecks in the communication process. Our findings provide insights for future inter-cell interference management.

\appendices
\section{Proof of Theorem \ref{theorem:bennet}}
\label{appendix:theorem}

Assume that the interference and noise are random variables with a zero mean, i.e., $\mathbb{E}[X_i] = \mathbb{E}[N] = \mathbb{E}[\frac{\Delta_i}{h_1}] = 0$. Let $\tilde{\Delta}_i = \frac{\Delta_i}{h_1}$. The variance can be estimated as
\begin{equation*}
    \Var(|\tilde{\Delta}_i|^2) = \mathbb{E}[|\tilde{\Delta}_i|^2] = \frac{\mathbb{E}\left[\sum_{j=2}^K\left|h_jX_j\right|^2 + |N|^2\right]}{\mathbb{E}[|h_1X_1|^2]} = \frac{1}{SINR}
\end{equation*}
Since noise is negligible compared to interference, using the AM-QM inequality, we can bound $|\tilde{\Delta}_i|$ as
\begin{equation*}
    |\tilde{\Delta}_i| \leq \max\left(\sum_{j=2}^K\frac{|h_jX_j|}{|h_1X_1|} \right) \leq b 
\end{equation*}
where $b = \max\left(\sqrt{\frac{(K-1)\sum\limits_{j=2}^K |h_jX_j|^2}{|h_1X_1|^2}}\right).$Using the Bennett's Inequality \cite{Bennett}, we have
\begin{equation*}
    \mathbb{P}\left[\frac{1}{m}\left|\sum_{i=1}^m\tilde{\Delta}_i\right| \geq \delta \right] \leq \text{exp}\left(-\frac{m\sigma^2}{b} h\left(\frac{b\delta}{\sigma^2}\right)\right) \leq \epsilon,
\end{equation*}
where $\sigma^2 = \Var(|\tilde{\Delta}_i|^2)$.

\section{Inter-cell interference in urban and rural areas}
\label{sec:urban}

To generalize our findings for 4G and 5G campus networks, we conducted real-world measurements in both urban and rural areas (Figure~\ref{fig:urban_rural}) using the same measurement methodology and tools introduced in \S\ref{sec:methodology}. We confirmed that all the key observations for inter-cell interference on campus also hold in urban and rural areas as follows. (1) From the user perspective, inter-cell interference is prevalent but its severity varies by scenario (urban $>$ campus $>$ rural, Figure~\ref{fig:interfering_cells_cdf_by_environment}). (2) Low-frequency region is prioritized for resource allocation (Figure \ref{fig:low frequency priority}). (3) Interference power increases as the number of interfering cells per UE grows, resulting in a decrease in SINR (Figures~\ref{fig:4g order} and \ref{fig:4g sinr}). (4) PCI collisions exist in both 4G and 5G (Figure \ref{fig:collision}). Due to page limit, Figures \ref{fig:4g order}-\ref{fig:collision} present only the results in the urban area, but observations (3) and (4) also hold in the rural area.

\begin{figure}[h]
\centering
\begin{subfigure}{0.14\textwidth}
    \includegraphics[width=\textwidth]{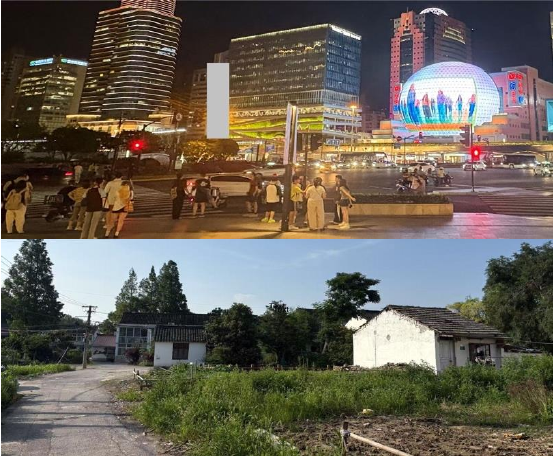}
    \caption{Urban and rural}
    \label{fig:urban_rural}
\end{subfigure}
\begin{subfigure}{0.16\textwidth}
    \includegraphics[width=\textwidth]{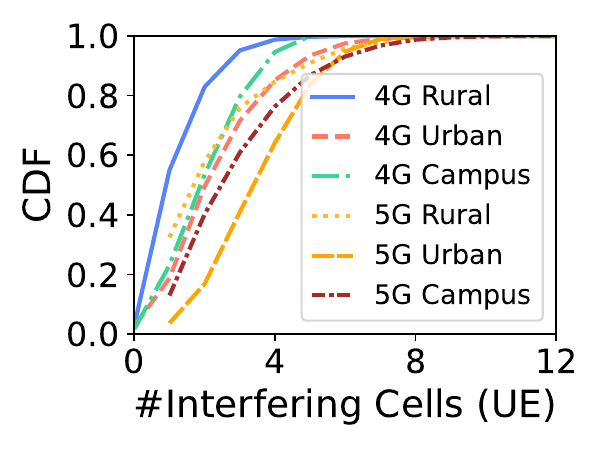}
    \caption{\#Interfering cells}
    \label{fig:interfering_cells_cdf_by_environment}
\end{subfigure}
\hspace{-0.2cm}
\begin{subfigure}{0.16\textwidth}
    \includegraphics[width=\textwidth]{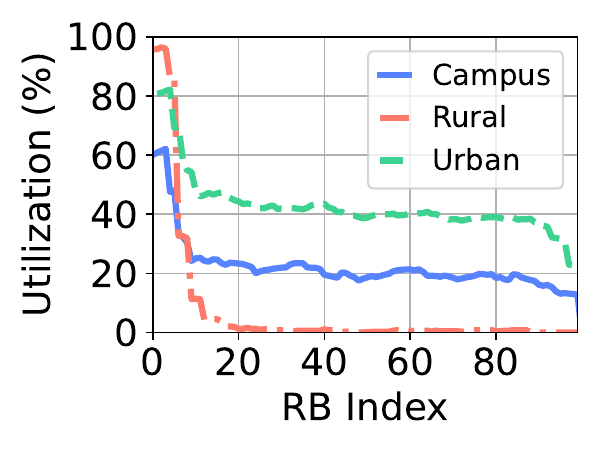}
    \caption{RB utilization}
    \label{fig:low frequency priority}
\end{subfigure}
\begin{subfigure}{0.18\textwidth}
    \includegraphics[width=\textwidth]{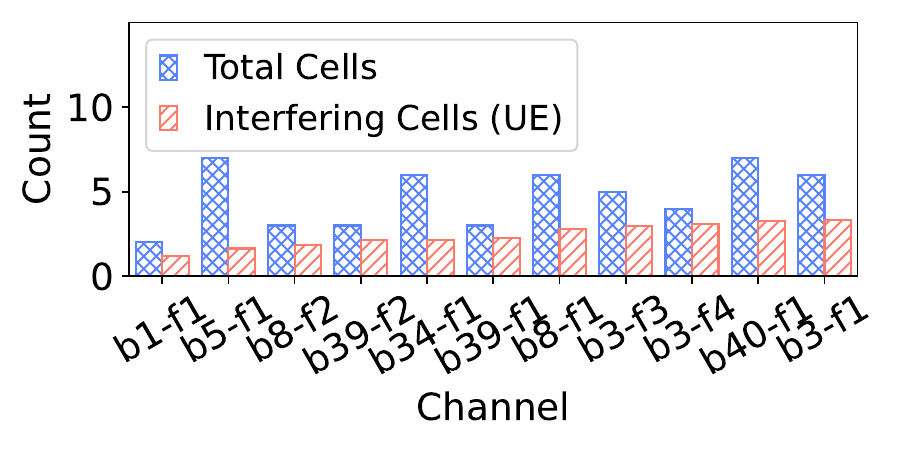}
    \caption{4G ordered channels}
    \label{fig:4g order}
\end{subfigure}
\hspace{-0.3cm}
\begin{subfigure}{0.18\textwidth}
    \includegraphics[width=\textwidth]{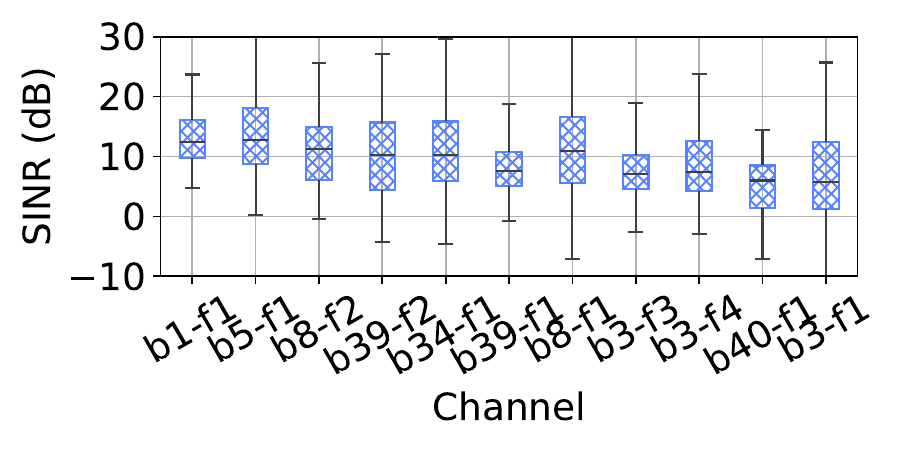}
    \caption{4G SINR}
    \label{fig:4g sinr}
\end{subfigure}
\begin{subfigure}{0.12\textwidth}
    \includegraphics[width=\textwidth]{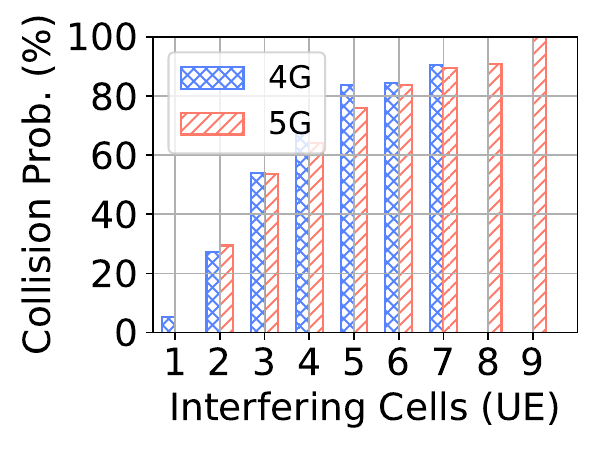}
    \caption{PCI Collision}
    \label{fig:collision}
\end{subfigure}
\caption{(a)-(c): urban and rural areas; (d)-(f): urban area}
\vspace{-1em}
\end{figure}

\bibliographystyle{IEEEtran}
\bibliography{reference}

\end{document}